\tikzset{radiation/.style={{decorate,decoration={expanding waves,angle=90,segment length=4pt}}},
         antenna/.pic={
        code={\tikzset{scale=5/10}
            \draw[semithick] (0,0) -- (1,4);
            \draw[semithick] (3,0) -- (2,4);
            \draw[semithick] (0,0) arc (180:0:1.5 and -0.5);
            \node[inner sep=4pt] (circ) at (1.5,5.5) {};
            \draw[semithick] (1.5,5.5) circle(8pt);
            \draw[semithick] (1.5,5.5cm-8pt) -- (1.5,4);
            \draw[semithick] (1.5,4) ellipse (0.5 and 0.166);
            \draw[semithick,radiation,decoration={angle=45}] (1.5cm+8pt,5.5) -- +(0:2);
            \draw[semithick,radiation,decoration={angle=45}] (1.5cm-8pt,5.5) -- +(180:2);
  }}
}
\tikzset{
         airplane/.pic={
        code={\tikzset{scale=5/10}
            \draw[semithick] (0,0) -- (1,4);
            \draw[semithick] (3,0) -- (2,4);
            \draw[semithick] (0,0) arc (180:0:1.5 and -0.5);
            \node[inner sep=4pt] (circ) at (1.5,5.5) {};
            \draw[semithick] (1.5,5.5) circle(8pt);
            \draw[semithick] (1.5,5.5cm-8pt) -- (1.5,4);
            \draw[semithick] (1.5,4) ellipse (0.5 and 0.166);
  }}
}
\newtheorem{thm}{Theorem}
\def\x{{\mathbf x}}
\def\h{{\mathbf h}}
\def\D{{\mathbf D}}
\def\e{{\mathbf e}}
\def\x{{\mathbf x}}
\def\0{{\mathbf 0}}
\def\U{{\mathbf U}}
\def\Eb{{\mathbf E}}
\def\i1{{\mathbf 1}}
\def\CN{{\mathcal{CN}}}
\begin{document}
	\title{Airplane-Aided Integrated Next-Generation Networking}

\author{ Muralikrishnan Srinivasan$^{1}$, Sarath Gopi$^{2}$, Sheetal Kalyani$^{2}$, Xiaojing Huang$^{3}$, Lajos Hanzo$^{4}$ {

\thanks{1. Muralikrishnan Srinivasan is with ETIS UMR8051, CY University, ENSEA, CNRS, Cergy, France.
(Email:muralikrishnan.srinivasan@ensea.fr)}
\thanks{2. Sarath Gopi and Sheetal Kalyani are with the Dept. of Electrical Engineering, Indian Institute of Technology, Madras, India.
(Emails:{ee14d007@ee, skalyani@ee}.iitm.ac.in).}
\thanks{3. Xiaojing Huang is with the University of Technology Sydney. (Email: Xiaojing.Huang@uts.edu.au)}
\thanks{4. Lajos Hanzo is with the School of Electronics and Computer
Science, University of Southampton.(Email: hanzo@soton.ac.uk)}
\thanks{ Muralikrishnan Srinivasan and Sarath Gopi are co-first authors.}}
\thanks{L. Hanzo would like to acknowledge the financial support of the Engineering and Physical Sciences Research Council projects EP/P034284/1 and EP/P003990/1 (COALESCE) as well as of the European Research Council's Advanced Fellow Grant QuantCom (Grant No. 789028).}
\thanks{This work was also supported in part by the Australian Research Council Discovery Project (DP200101532).}
\thanks{\textbf{This article has been accepted for publication in IEEE
Transactions on Vehicular Technology, but has not been fully edited. Content may change prior to final publication. Citation information: DOI 10.1109/TVT.2021.3098098, IEEE
Transactions on Vehicular Technology}}
} 

	\maketitle
	\begin{abstract}
     A high-rate yet low-cost air-to-ground (A2G) communication backbone is conceived for integrating the space and terrestrial network by harnessing the opportunistic assistance of the passenger planes or high altitude platforms (HAPs) as mobile base stations (BSs) and millimetre wave communication. The airliners act as the network-provider for the terrestrial users while relying on satellite backhaul.  
     Three different beamforming techniques relying on a large-scale planar array are used for transmission by the airliner/HAP for achieving a high directional gain, hence minimizing the interference among the users. Furthermore, approximate spectral efficiency (SE) and area spectral efficiency (ASE) expressions are derived and quantified for diverse system parameters.
\end{abstract}

	\section{Introduction}
\par Next-generation wireless standards are expected to cope with increased traffic demands and support emerging applications even in remote locations such as rural hinterlands, mountains, deserts and even for vessels such as cruise-ships in the oceans \cite{Huang2019}. However, the operational fifth-generation (5G) standards have predominantly been designed for terrestrial communications. One of the promising techniques of augmenting cellular communication is through air-based platforms such as unmanned aerial vehicles (UAV) or high-altitude platforms (HAP). Hence, extensive research has been dedicated to the design, channel modelling, and security of UAV-based cellular communications \cite{Khawaja2019, Fotouhi2019, Zeng2019, Kurt2020}, as well as to their performance analysis \cite{Mozaffari2016, Azari2017, Chetlur2017, Azari2017a, Galkin2017, Ono2016, Lyu2016, Enayati2019, Arum2019, Xu2019}.   Sakhaee and Jamalipour~\cite{sakhaee2006global} along with Kato~\cite{sakhaee2006aeronautical} showed as early as 2006 that the probability of finding at least two but potentially up to dozens of aircraft capable of establishing an AANET above-the-cloud is close to $100\%$. It was inferred by investigating a snapshot of flight data over the United States (US). They also proposed a quality of service (QoS) based so-called multipath Doppler routing protocol by jointly considering both the QoS and the relative velocity of nodes in order to find stable routing paths.

\par Integrating the aerial networks with the terrestrial networks has the potential of increasing both the data rate and the coverage quality of terrestrial networks \cite{Reynaud2011, Zhang2017, Dinc2017, Azari2017b, Qiu2019, Cheng2020}.  There have also been some attempts to integrate the space networks with terrestrial networks or to provide Internet coverage for airliner \cite{Evans2005, Cianca2005, Wang2017, Lagunas2015, Kandeepan2011, Zhang2018, Liu2018, Xu2019a}. The applicability of these prior contributions are tabulated in \ref{tab:works}. However, most of these contributions rely on reusing the existing long-term evolution (LTE) bands, which are already congested in the sub-6GHz bands \cite{Simonite2018, 3GPP}.   Therefore, the creation of a high-capacity integrated space terrestrial network (ISTN) or a  space-air-ground integrated network (SAGIN) is still elusive at the time of writing both due to the bandwidth limitation of aerial backbones and owing to the limited area spectral efficiency (ASE) of the air-to-ground (A2G) systems, given their large footprint on the ground.   Therefore, it is imperative to explore new architectures integrating the existing terrestrial networks with space networks. 
 
	\begin{table}[!h]
	    \centering
	    \begin{tabular}{||c|r||}
	    \hline
	    \hline
	        Works & Applicability\\
	        \hline
	        \hline
	        \cite{Reynaud2011} & Emergency networks 	\\
	        \hline
	        \cite{Zhang2017} & Vehicular networks\\
	         \hline
	         \cite{Dinc2017} & Broadbad connectivity\\
	         \hline
	         \cite{Azari2017b} & Cellular networks\\
	         \hline
	         \cite{Qiu2019} & 5G Cellular networks\\
	         \hline
	         \cite{Cheng2020} & Secure networks \\
	         \hline
	         \cite{Evans2005} & Multi-media systems  \\
	         \hline
	         \cite{Cianca2005} & Integration of space and HAP systems  \\
	         \hline
	         \cite{Wang2017} & Emergency networks  \\
	         \hline
	         \cite{Lagunas2015} & Cognitive communications \\
	         \hline
	         \hline
	         \hline
	    \end{tabular}
	    \caption{Integrated aerial-terrestrial networks in Sub-6GHz band}
	    \label{tab:works}
	\end{table}

\subsection{ Related Contributions:} 
  
A critical cornerstone of the next-generation systems is the potential exploitation of millimetre wave (mmWave) carrier frequencies to benefit from their broad unused spectrum. Hence the authors of \cite{Khawaja2017, Gapeyenko2018, Zhao2018, Cuvelier2018, Dutta2019, Popoola2020} have investigated the challenges of mmWave based A2G and air-to-air (A2A) communications. For example, the two-ray propagation model's applicability in different scenarios employing UAVs was explored in \cite{Khawaja2017}. The pros and cons of UAV-BS in complementing the mmWave backhaul were demonstrated in \cite{Gapeyenko2018}. The concept of mmWave A2A networks was first explored by Cuvelier and Heath \cite{Cuvelier2018}, while mmWave based HAPs in terrestrial transmissions were studied in \cite{Dutta2019, Popoola2020}.

\par As a further development, Huang {\em et al.}~\cite{Huang2019} have proposed the ISTN concept relying on civil airliner networks and mmWave communication to form a high-capacity yet low-cost A2A and A2G communications backbone employing high-gain antenna arrays. In this concept, the airliners act as an efficient network-provider for terrestrial users, since the distance from the planes to the ground is much shorter than that from the satellite. Furthermore, to provide A2G cellular coverage for small cells that can support high ASE, adaptive beamforming is proposed. In areas where the civil-airliners cannot be used, dedicated HAPs would be used as the backbone. A similar topology is presented in \cite[Fig 1]{Zhang2019}. 

\begin{table*}[t]
 \begin{center}
			\begin{tabular}{|l|l|l|l|l|l|l|l|l|l|l|}
				\hline
				& Our Scheme & \cite{El-Jabu2001} & \cite{Khan2010}  &  \cite{Huo2018} &\cite{Zhong2019} &\cite{Zhong2020} & \cite{Interdonato2020} & \cite{Zhu2019c}  & \cite{Vaezy2020} & \cite{Xu2017three}\\
				\hline
				mmWave &$\checkmark$ & & &$\checkmark$&$\checkmark$ &$\checkmark$ & 	&$\checkmark$ &$\checkmark$ &$\checkmark$    \\
				\hline
				Airliner/UAV backbone &  Airliner & UAV & & UAV& UAV & UAV & & UAV & UAV & HAP\\
								\hline 
				Adaptive null steering & $\checkmark$ & & $\checkmark$&  & && $\checkmark$ & & & \\
				\hline 
				Channel unaware & & & & & & & & & &
				\\ 
				precoding &$\checkmark$ & & $\checkmark$ & &&$\checkmark$ & $\checkmark$ &  & &\\
				\hline
				Rician channel &$\checkmark$ & & & &&$\checkmark$ & & & &$\checkmark$ \\
				\hline
				
				Simulated Metric & ASE/SE & SE & Array  & SE & SE & SE& SE & Beam  & Beam & SE\\
				 &  & & response  &  & &  &  & coverage  & coverage & \\
				\hline
				Theoretical expressions & ASE/SE &  & & & & SE &  & & & SE\\
				\hline
			\end{tabular}
		\caption{Comparison of the proposed scheme with existing works}
		\end{center}
		\label{tab:CompTable}
	\end{table*}

\subsection{ Design challenges:}
To actually design such a high-capacity airline-aided integrated network, several challenges have to be addressed. For example, a cruising airliner maintains an altitude of at least $10~km$ from the ground, while solar-charged unmanned aircraft are envisioned to circle above $20~km$ for avoiding civilian planes. The mmWave channel suffers from substantial pathloss owing to raindrops, high-absorption and other atmospheric effects, especially at a carrier frequency of $73.5$ GHz. Another challenge to overcome is the huge channel estimation overhead, which results from the rapidly fluctuating high-Doppler channel between the cruising airliner and ground users. Hence, a careful selection of the channel model, antenna dimensions, Rician factor and other system parameters is required for investigating a realistic stand-alone model.
 
 \subsection{Contributions}
To overcome the above-mentioned challenges, we design a high-performance system having a high data-rate and ASE and provide theoretical performance guarantees with the aid of approximate expressions. We consider a planar-array aided stand-alone airliner/HAP in a macro-cell communicating with the terrestrial BS/users.  Although a strong line-of-sight (LoS) component exists between the airplane and the ground users, the non-line-of-sight (NLoS) component fluctuates drastically over time.   Hence, to achieve a high directional gain while minimizing the users' interference, we propose three different channel-agnostic transmit precoding schemes for avoiding the massive pilot overhead required for estimating the channel-state information (CSI) or the NLoS component. Explicitly, our Transmit Precoders (TPC) rely only on the users' position relative to the airliner and they are quite robust to incorrect Doppler compensation and position vector mismatches.

\par We also derive approximate expressions for the SE/ASE of the users. Furthermore, the proposed schemes are evaluated through extensive simulations, and its performance is compared to the analytically obtained values. Additionally, depending on the dimensions of the planar array and of the LoS factor, the ASE achieved by our system becomes several times higher than that of conventional terrestrial networks \cite{Alouini1999, Xin2015, Li2016, Ding2015} capable of providing data rates on the order of several Gbps.    The authors of \cite{El-Jabu2001, Khan2010, Huo2018, Zhong2019, Zhong2020, Interdonato2020, Zhu2019c, Vaezy2020, Xu2017three} have considered 3D beamforming in the context of UAV communications, where most of them tended to rely either on UAVs flying at a modest altitude or on channel-aware TPCs. Specifically, the authors of \cite{Xu2017three} designed a TPC relying on the effective channel matrix between the access point (AP) and the users. Therefore, a dedicated downlink channel estimation phase is required by the TPC. By contrast, in our case, the channel between the airliner and the user is dominated by the LoS component due to the airliner's altitude. Typically the aeronautical model for such a scenario has a strong LoS path and a much weaker NLoS ground-reflected path \cite{haas2002aeronautical}. Hence using a TPC vector relying on the estimated channel matrix is counterproductive. Furthermore, in such a highly mobile scenario, accurate channel estimation requires a potentially excessive pilot overhead. Explicitly, compared to the pedestrian walking across a mmWave cell at $5~ km/hr$, a plane travelling at $1000~km/hr$ would require a $200$ times higher pilot overhead. Even if we disregard the huge training overhead requirement, the CSI is prone to estimation error.

Against the above backdrop, we boldly contrast our novel contributions to the prior art in Table II. The theoretical analysis of the proposed system and our extensive simulations indicate that our design leads to high capacity airplane-aided integrated networks that are eminently suitable for filling the coverage-holes of next-generation wireless systems.

The rest of the paper is structured as follows. In Section II, the proposed system design is discussed in detail, along with different beamforming. In Section III, theoretical expressions are derived for the ASE/SE using the popular use and forget bound. In Section IV, our simulation results and interesting design guidelines are discussed, while in Section V, some future research directions are provided.

\section{Proposed System Design}
	
	\begin{figure*}
	\centering
	\resizebox{0.70\textwidth}{!}
	{
		\begin{tikzpicture}
		

		\begin{scope}[canvas is xz plane at y=-1]

		\node (x)[] at (6, 6) {};
 		\node (U1)[label=below:{\Large{User}}] at (6.5, 5.5) {};
 		\path (6.3,5.8) pic[scale=0.30] {antenna};
		\node (I1)[] at (6, 10) {};
		\path (6.3,10.2) pic[scale=0.30, color=blue] {antenna};
		\node (I2)[] at (9.5, 8) {};
		\path (9.5,8.1) pic[scale=0.30, color=blue] {antenna};
		\node (I3)[] at (9.5, 4) {};
		\path (9.6,4.2) pic[scale=0.30, color=blue] {antenna};
		\node (I4)[] at (6, 2) {};
		\path (6,2.3) pic[scale=0.30, color=blue] {antenna};
		\node (I5)[] at (2.5, 4) {};
		\path (2.3,4.5) pic[scale=0.30, color=blue] {antenna};
		\node (I6)[] at (2.5, 8) {};
		\path (2,8) pic[scale=0.30, color=blue] {antenna};

		\draw[dashed, very thick] (x) circle (3cm);
		\draw[dashed, very thick] (x) circle (5cm);
		
		\filldraw[fill=lightgray, fill opacity =0.3, draw=black] (x) circle (1cm);
		\filldraw[fill=lightgray, fill opacity =0.3, draw=black] (I1) circle (1cm);
		\filldraw[fill=lightgray, fill opacity =0.3, draw=black] (I2) circle (1cm);
		\filldraw[fill=lightgray, fill opacity =0.3, draw=black] (I3) circle (1cm);
		\filldraw[fill=lightgray, fill opacity =0.3, draw=black] (I4) circle (1cm);
		\filldraw[fill=lightgray, fill opacity =0.3, draw=black] (I5) circle (1cm);
		\filldraw[fill=lightgray, fill opacity =0.3, draw=black] (I6) circle (1cm);

		
		\draw[-, thick] (x.center) -- (I6.center) node[pos=0.25, left]{\small{Reuse Distance $D$}};
		
		\node (MCC)[label=left:{\Large{Macro-cell center}}] at (0, 0) {};
		\filldraw (MCC) circle (2pt) ;
        \draw [red,thick,domain=-35:100] plot ({0+15*cos(\x)}, {0+15*sin(\x)});
        \draw [very thick,domain=0:45] plot ({0+1*cos(\x)}, {0+1*sin(\x)});

       \draw[->, dotted] (MCC.center) -- (0, 15);
       \node (Y)[label=right:{\Large{y}}] at (0, 15) {};
        \draw[->, dotted] (MCC.center) -- (15, 0) node[pos=0.75, above]{\Large{macro-cell radius $R$}};
       	\node (X)[label=right:{\Large{x}}] at (15, 0) {};
        \draw[->, dotted] (MCC.center) -- (U1.center);
		\end{scope}

		\begin{scope}[canvas is xz plane at y=3]
		\node (A)[label=left:{\Large{Origin}}] at (0, 0) {\includegraphics[scale=0.20]{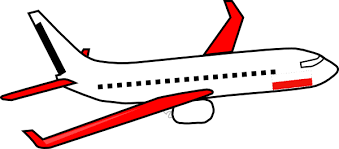}};
        \filldraw (A) circle (2pt) ;
		\draw[-, dotted] (U1.center) -- (A.center);
		\draw[-, dashed, very thick] (MCC.center) -- (A.center);
		\end{scope}
		
		\begin{scope}[canvas is xy plane at z=0]
		\draw [very thick,domain=-90:-55] plot ({0+1*cos(\x)}, {3+1*sin(\x)});
		\node (zen)[label=right:{\Large{$\theta^z$ zenith angle}}] at (0, 1.7) {};
		\node (azi)[label=right:{\Large{$\theta^a$ azimuth angle}}] at (0, -0.4) {};
		\draw[->, dotted] (A.center) -- (0, 5) node[pos=0.95, left]{\Large{z}};
		\end{scope}
		
		\end{tikzpicture}
		}
		\caption{The Airliner/HAP is assumed to be at the origin. The macro-cell centre is the point on the ground directly below the airliner. A circular macro-cell of radius $R$ is considered around the macro-cell centre. The zenith-azimuth angle pair $(\theta^z, \theta^a)$ of the 'User' in a micro-cell of interest (MCI), is marked with respect to the airliner. Each micro-cell has a radius $r$. The first tier of $N_I$ interfering micro-cells is shown at a reuse distance $D$. In this figure, $N_I=6$.} 
		\label{fig1}
	\end{figure*}

	In this section, we design the ISTN backbone and propose a system model using various adaptive beamforming for supporting high data-rates and seamless connectivity. Consider a circular macro-cell of radius $R$ with an airliner at its centre at an altitude $H_t$. The radius $ R $ is chosen to be at least $5~km$, so a minimum inter-airliner distance of $10~km$ is maintained. In remote locations outside the regular flight path, HAPs can be installed for providing seamless connectivity to satellites. Free-space optical (FSO) links connect them to Low-Earth orbit (LEO)/ medium-Earth orbit (MEO) satellites as their high-speed backhaul.
	
	Each airliner/HAP is equipped with a planar antenna having $M \times M $ equally spaced elements. Without loss of generality, it can be assumed that the antenna elements are parallel to the ground and the centre of the antenna is the origin $(0,0,0)$. The macro-cell is further divided into several tightly packed micro-cells of radius $r << R$. Each micro-cell supports a single time-frequency block. The users can either be a cellular user equipment (CUE) or even an LTE base station, which in turn supports several UEs. The intended user is at position $(x_0,y_0,-H_t)$, and the micro-cell containing it is referred to as the micro-cell of interest (MCI).
	
	Assume that there are $N_1$, $N_2$ ...$N_J$ interfering micro-cells in the first $J$ tiers using the same time-frequency block. The centres of these micro-cells are located at distances $D$, $2D$, ..., $JD$ from the MCI, where $D$ denotes the reuse distance. Let $N_I=N_1+N_2+...+N_J$ be the total number of interfering cells and let the coordinates of these interfering users be  $(x_i,y_i,-H_t)$ $i=1,..., N_I$. All the users are assumed to have a single antenna. Let $(\theta_i^z, \theta_i^a)$ represent the zenith and azimuth angle pair for the $i^{th}$ user, which are:
	\begin{align}
	\theta_i^z =\tan^{-1}\left(\frac{\sqrt{x_i^2+y_i^2}}{-H_t}\right), \quad i=0,..., N_I
	\label{thetaiz}
	\end{align}
	 and
	\begin{align}
	\theta_i^a= \tan^{-1}\left(\frac{y_i}{x_i}\right), \quad i=0,..., N_I.
	\label{thetaia}
	\end{align}		 
	Here $i=0$ represents the user under consideration, while $i=1,...,N_I$ represent the interferers.
	The entire system is shown in Fig. \ref{fig1}.

     Note that the typical cruising airliner altitudes are in the $9-12~km$ range. At such distances, the mmWave channels' attenuation, say at $73.5 ~ GHz $, is significant. Existing contributions, such as \cite{Huo2018, Zhong2019, Zhong2020, Zhu2019c, Vaezy2020}, which deal with aerial mmWave networks, consider only low-flying UAVs or low altitude platforms (LAPs) and hence suffer from relatively low attenuation. In the absence of beamforming at the transmitter, the users in the macro-cell suffer from mutual interference, resulting in a reduced data-rate. Therefore, to reduce the mutual interference amongst the users and improve the spectral efficiency, some form of adaptive beamforming must be used by the airliner's planar array. Furthermore, the adaptive beamforming schemes must be robust to incorrect Doppler compensation and position vector mismatches. We therefore propose three different beamforming or TPC schemes at the Airliner/HAP. The design of the beamforming vectors is described in the subsequent paragraphs.

    \subsection{Null-steered beamforming (NSB) design}
    \label{nullbf}
NSB relies on signal processing techniques for creating transmit nulls and maxima in the undesired and desired receivers' directions, respectively, for mitigating the interference \cite{Khan2010, Litva1996, Yoo2003}. 
 For $i=0,1,.., N_I$, let $\e_{i}$ be the $M^2 \times 1$ vector representation of the $i$th user's steering vector with respect to each of the components of the planar array. The three-dimensional Cartesian co-ordinates of the $(m,n)$th component of the planar array are represented by $( x_m,y_n, 0)$, where $x_m = \left[-\frac{M-1}{2} + (m-1)\right]\frac{\lambda}{2} $ and $y_n = \left[-\frac{M-1}{2} + (n-1)\right]\frac{\lambda}{2}$, for $m=1,..M$ and $n=1,...,M$. Note that the inter-elemental spacing is $\lambda/2$, where $\lambda$ is the wavelength of the carrier. Thus the entry of $\e_{i}$, which corresponds to the position of the user with respect to the $(m,n)$th element of the planar array, is given by $\exp\left[j\frac{2\pi}{\lambda}\left(x_m \psi_i^x + y_n\psi_i^y\right)\right]$, where we have
	\begin{align}
	\psi_i^x = \sin \theta_i^z \cos \theta_i^a,
	\label{psix}
	\end{align}
	and 
	\begin{align}
	\psi_i^y =  \sin \theta_i^z \sin \theta_i^a,
	\label{psiy}
	\end{align}
 while $\theta_i^z$ and  $\theta_i^a$ are defined in (\ref{thetaiz}) and (\ref{thetaia}), respectively, which are functions of the user location. Now the  null-steered beamforming vector $\tilde{\e}_i$ used by the airliner is 
 \begin{align}
	\tilde {\e}_i = \e_{i} - \Eb_i \left(\Eb_i^H\Eb_i\right)^{-1}\Eb_i^H\e_{i}, \forall i=0,1,..., N_I,
	\label{p5eqn2}
	\end{align}
	where $\Eb_i$ is the matrix whose columns are the steering vectors, except for $\e_i$, which is given by:
	\begin{align}
	\Eb_i = \left[\e_0~\e_2~...~\e_{i-1}~\e_{i+1}~...~\e_{N_I}\right].
	\label{p5eqn3}
	\end{align}
	This can be obtained by solving the following  optimization problem \cite{van2004optimum}:
	\begin{equation}\label{opt1}
	    \begin{aligned}
	    	&\underset{\tilde{\e}_i}{\min} \|\tilde{\e}_i - \e_{i}\|^2\\ ~\text{such that}~ & \tilde{\e}_i^H\Eb_i = \bm{0}.
	    \end{aligned}
	\end{equation}
NSB may also be interpreted as a zero-forcing precoder, where the precoding vectors only require the knowledge of the user location.
 
\subsection{Null-Steered Beamformer with Derivative Constraints (NSB-D)}
  The NSB detailed in Section \ref{nullbf} can be extended by adding additional derivative constraints \cite[3.7.2]{van2004optimum}. The higher order derivatives of the directional pattern in the directions of nulls are set to zero to broaden the beam-widths, to make the TPC robust to mismatches in the steering vector. In this paper, we explicitly add the following first-order derivative constraints to (\ref{opt1}):
\begin{align}
\bar{\e}_i^a =\frac{\partial}{\partial \theta_i^a} \e_i &= 0~\forall~i=0, 1, ..., N_I \nonumber \\
\bar{\e}_i^z =\frac{\partial}{\partial \theta_i^z} \e_i &= 0~\forall~i=0, 1, ..., N_I.
\label{p5eqnn4}
\end{align}
Solving the optimization problem yields the beamforming vector $\tilde{\e}_i$ as \cite{van2004optimum}:
\begin{align}
 \tilde{\e}_i = \e_i - \tilde \Eb_i\left(\tilde \Eb_i^H\tilde \Eb_i\right)^{-1}\tilde \Eb_i^H\e_i~\forall~i=0, 1, ..., N_I,
\label{p5eqnn6}
\end{align}
where
\begin{align}
\tilde \Eb_i = \left[\e_0 ~...~ \e_{i-1}~ \e_{i+1}~ ...~ \e_{N_I}~\bar{\e}_0^a ~\bar{\e}_0^z~...~\bar{\e}_{N_I}^a~\bar{\e}_{N_I}^z\right].
\label{p5eqnn5}
\end{align}
Note that (\ref{p5eqnn6}) is similar to (\ref{p5eqn3}) except that $\Eb$ is replaced with $\tilde \Eb$, which includes the derivative constraints of all the users. 
 
\subsection{Minimum-Power Distortionless Response Beamformer (MPDRB)}
  Another popular beamformer is the MPDRB, where the total output power is minimized subject to a distortionless constraint \cite{van2004optimum}. In other words, one can determine the steering vectors $\tilde{\e}_i$ so that the total power $\|\tilde{\e}_i^H \hat \Eb_i\|^2$ subject to a distortionless constraint is minimized, which is formulated as :
\begin{equation}
    \begin{aligned}
    \underset{\tilde{\e}_i}{\min} ~  \|\tilde{\e}_i^H \hat \Eb_i\|^2 ~\text{such that}~  \tilde{\e}_i^H\e_{i}= \bm{1},
    \end{aligned}
\end{equation}
where $\hat \Eb_i= \left[\e_0~\e_1~...~\e_{i-1}~\e_i ~\e_{i+1}~...~\e_{N_I}\right]$. Solving the optimization problem results in \cite{van2004optimum}:
\begin{equation}
 \tilde{\e}_i = \frac{ \left[\hat \Eb_i \hat \Eb_i^H\right]^{-1} \e_i}{\e_i^H \left[\hat \Eb_i \hat \Eb_i^H\right]^{-1} \e_i}.
 \label{eqnn6}
\end{equation}
However, note that specific to our application, even for a $200 \times 200$ planar array, since the position vectors are of dimensions $40000 \times 1$, the dimension of the matrix $\hat \Eb_i \hat \Eb_i^H$ is $40000 \times 40000$,
and hence ill-conditioned. To circumvent ill-conditioning, one can apply the singular value decomposition (SVD) of $\hat \Eb_i$ to obtain the inverse in the numerator of (\ref{eqnn6}). Let $\U$ and $\D$ be the left singular value matrix and diagonal matrix of singular values, respectively. Since there are $N_I+1$ users, there will be $N_I+1$ non-zero singular values. If $\D_{N_I+1}$ and $\U_{N_I+1}$ are the diagonal matrix of non-zero singular values and the corresponding left singular value matrix, then we can compute $\tilde{\e}_i$ in (\ref{eqnn6})  as 
	\begin{align}
		\tilde{\e}_i = \frac{\U_{N_I+1}\D_{N_I+1}^{-2}\hat{\e_i}}{\hat{\e_i}^H\D_{N_I+1}^{-2}\hat{\e_i}^H},
		\end{align}
		where, $\hat{\e_i} = \U_{N_T+1}^H\e_i$.
Note that, the MPDRB relies on all the user locations in the minimization criterion, including the actual desired receiver location. Similar to the case of NSB, additional derivative constraints can be added to the MDPRB formulation to make it robust. However, evaluating the solution in this case will be time-consuming using popular software such as MATLAB, owing to the $M \times M$ dimensional $\hat \Eb_i \hat \Eb_i^H$.  
	
\par 
 The system's efficacy under the different beamforming schemes is determined by a pair of popular metrics, namely the ASE and the SE, which are functions of various system parameters, like the Rician factor $K$, the array dimensions $M\times M $, or the micro-cell radius, etc. In the next section, we derive the approximate expressions of the performance metrics, followed by extensive simulation results for characterizing the system.

	\section{Theoretical approximations for ASE/SE} 
	
	It is essential for us to characterize the SINR and then derive theoretical expressions for our metrics, such as the ASE and SE. To begin with, let $\alpha_i$ $i=0,1, ..., N_I$ represent the symbol intended for the $i$th user. Without loss of generality, let $i=0$ denote the user under consideration and $i=1,.., N_I$ denote the interferers in the other micro-cells. The symbol received by the user is 
	 
	\begin{equation}
	y= \sqrt{P_r} \h_{0,Ric}^H \tilde{\e}_0  \alpha_0 +  \sqrt{P_r}\sum_{i=1}^{N_I} \h_{0,Ric}^H \tilde{\e}_i \alpha_i + n, 
	\end{equation}
where $n$ represents the complex Gaussian noise having the power of 
	$\sigma^2 = kTBN_F$, with $k = 1.374\times 10^{-23}$ being Boltzmann's constant, $T$ the temperature in Kelvins, $B$ the bandwidth, and $N_F$ the noise figure of the receiver. 
	The received power is given by:
	\begin{align}
	P_r = \frac{P_tG_tG_r}{\tilde{\nu} \nu},
	\label{rec_pow}
	\end{align} 
	where $P_t$ is the power transmitted from the Airliner/HAP, $G_t$ and $G_r$ are the transmitter and receiver antenna gains, while $\tilde{\nu}$ includes the frequency-dependent atmospheric loss also including the back-off loss of the modulation scheme as well as  other transmitter and receiver losses. Finally, the term  $\nu$ represents the path-loss given by \cite{maccartney2017rural},
	\begin{align}
	\nu &= 20\log\left(\frac{4\pi d ~f_c }{c}\right)~dB 
	= \left(\frac{40\pi d~ f_{c, GHz} }{3}\right)^2, 
	\label{pathloss}
	\end{align}
	where $c=3\times10^8 ~ m/s$ is the speed of the light, $d$ is the distance from the airliner/HAP to the user in meters and $f_{c, GHz}$ is the carrier-frequency in $GHz$.
    The Rician fading channel between the Airliner/HAP and the user is represented by:
	\begin{align}
	\h_{i,Ric} = \sqrt{\frac{K}{1+K}} {\e}_i+\sqrt{\frac{1}{1+K}}h_i\i1,
	\label{p5eqn1}
	\end{align}
	where $K$ is the Rician factor and $h_i$ is the NLoS component, while $\i1$ is the vector of ones. Furthermore, the NLoS component $\h_i$ is a complex Gaussian random variable (RV) with zero mean and unit variance. The instantaneous SINR is now given by the following theorem.
	\begin{thm}\label{thm:sinr}
	The instantaneous signal to interference plus noise (SINR) $\gamma_{SINR}$, of the desired user under NSB is given by \footnote{  Very similar expressions can be derived for the other two schemes, but omitted here given the page limit.  }:
	\begin{align}\label{SINR}
	\gamma_{SINR} &= \frac{P_r |X_s|^2}{\sum_{i=1}^{N_I}P_r|X_i|^2+ \sigma^2},
	\end{align}
	where $X_s \sim \CN(\mu,\sigma_{s}^2)$, with 
	\begin{align}
	\mu = \sqrt{\frac{K}{1+K}} \left(M^2- \e_{0}^H\Eb_0 \left(\Eb_0^H\Eb_0\right)^{-1}\Eb_0^H\e_{0}\right),
	\label{mu}
	\end{align}
	and 
	\begin{align}
	\sigma_{s}^2 = \frac{1}{1+K} \left\|\i1^H\left(1-\Eb_0 \left(\Eb_0^H\Eb_0\right)^{-1}\Eb_0^H\right)\e_0\right\|^2.
	\label{sigma_s}
	\end{align}
	Furthermore, still referring to (\ref{SINR}), we have $X_{i}  \sim \CN(0,\sigma^2_{i})$, where 
	\begin{align}
	\sigma^2_{i} = \frac{1}{1+K} \left\|\i1^H\left(1-\Eb_i \left(\Eb_i^H\Eb_i\right)^{-1}\Eb_i^H\right)\e_i\right\|^2.\label{sigma_i}
	\end{align}
	\end{thm}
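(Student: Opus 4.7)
The plan is to substitute the Rician decomposition $\h_{0,Ric}=\sqrt{K/(1+K)}\,\e_0+\sqrt{1/(1+K)}\,h_0\,\i1$ directly into the expression for $y$ and to split each inner product $\h_{0,Ric}^H\tilde{\e}_i$ into a deterministic LoS part and a random NLoS part,
\begin{equation*}
\h_{0,Ric}^H\tilde{\e}_i=\sqrt{\frac{K}{1+K}}\,\e_0^H\tilde{\e}_i+\sqrt{\frac{1}{1+K}}\,h_0^{*}\,\i1^H\tilde{\e}_i.
\end{equation*}
Since $h_0\sim\CN(0,1)$, the second summand is a zero-mean complex Gaussian with variance $(1+K)^{-1}|\i1^H\tilde{\e}_i|^2$, so the marginal distribution of each $\h_{0,Ric}^H\tilde{\e}_i$ is pinned down once its deterministic mean is computed. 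The remainder of the proof is an accounting exercise in evaluating $\e_0^H\tilde{\e}_i$ and $\i1^H\tilde{\e}_i$ using the projection form of $\tilde{\e}_i$ given in (\ref{p5eqn2}).

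For the desired-signal random variable $X_s=\h_{0,Ric}^H\tilde{\e}_0$ I would first invoke $\tilde{\e}_0=(I-\Eb_0(\Eb_0^H\Eb_0)^{-1}\Eb_0^H)\e_0$ together with the fact that every entry of $\e_0$ has unit modulus, so $\|\e_0\|^2=M^2$. The LoS term then simplifies to $\sqrt{K/(1+K)}\,(M^2-\e_0^H\Eb_0(\Eb_0^H\Eb_0)^{-1}\Eb_0^H\e_0)$, matching $\mu$ in (\ref{mu}) exactly. Feeding the same projection identity into the NLoS term delivers $\sigma_s^2$ as in (\ref{sigma_s}), yielding $X_s\sim\CN(\mu,\sigma_s^2)$.

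For each interferer $i\ge 1$ the crucial observation---and really the whole point of null-steering---is that $\e_0$ is one of the columns of $\Eb_i$ by the definition in (\ref{p5eqn3}). The null constraint $\tilde{\e}_i^H\Eb_i=\mathbf{0}$ from the optimization problem (\ref{opt1}) therefore forces $\e_0^H\tilde{\e}_i=0$, so the LoS contribution of the interference cancels and only the NLoS piece survives as $X_i=\sqrt{1/(1+K)}\,h_0^{*}\,\i1^H\tilde{\e}_i$, a $\CN(0,\sigma_i^2)$ variable with $\sigma_i^2$ as in (\ref{sigma_i}) after the same projection manipulation. Writing $y=\sqrt{P_r}\,X_s\,\alpha_0+\sqrt{P_r}\sum_{i=1}^{N_I}X_i\,\alpha_i+n$ and taking the instantaneous ratio of received signal power to interference-plus-noise power (with unit-energy symbols) then reproduces (\ref{SINR}).

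I expect the main obstacle to lie less in the algebra than in flagging the right structural facts: (i) the unit-modulus entries of $\e_0$, giving $\|\e_0\|^2=M^2$; (ii) the presence of $\e_0$ among the columns of $\Eb_i$, which is what kills the LoS interference; and (iii) the fact that the NLoS piece is a scalar multiple of a single underlying Gaussian $h_0$. Point (iii) in particular means that $X_s$ and $\{X_i\}_{i=1}^{N_I}$ share the same randomness and are therefore not mutually independent, but the theorem only claims their marginal distributions, which is all that is needed to plug into the use-and-forget bound invoked in Section III.
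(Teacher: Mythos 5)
Your proposal is correct and follows essentially the same route as the paper's Appendix~A proof: substitute the Rician decomposition of $\h_{0,Ric}$, split each inner product into a deterministic LoS part and a Gaussian NLoS part driven by $h_0$, and read off the mean and variance from the projection form of $\tilde{\e}_i$. If anything, your observation that $\e_0$ is a column of $\Eb_i$ so that $\e_0^H\tilde{\e}_i=0$ exactly is slightly sharper than the paper's statement that this term is only ``approximately'' zero, and your remark that $X_s$ and the $X_i$ share the randomness of $h_0$ (so only marginals are claimed) is a correct point the paper leaves implicit.
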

	\begin{proof}
	For the proof, please see Appendix \ref{app:sinr}.
	\end{proof}

	Assuming that the users in a cell are allocated identical bandwidths, the Area Spectral Efficiency (ASE) is defined as the sum of the maximum bit rate/Hz/unit
	area supported by the cell's Airliner/HAP \cite{Alouini1999}: 
	\begin{equation}
	ASE = \frac{C}{\pi (D/2)^2},
	\end{equation}
	where  $C$ is the capacity of the intended user in Bps/Hz. Given $\gamma_{SINR}$, the average channel capacity is formulated as:
	\begin{equation}
	\begin{aligned}
	\bar{C}_{SINR} &=  \mathbb{E}[\text{log}_2(1+\gamma_{SINR})]\\
	&= \int_{0}^{\infty}\text{log}_2(1+\gamma_{SINR}) f(\gamma_{SINR}) \text{d}\gamma_{SINR},
	\end{aligned}
	\end{equation}
	where $E[.]$ represents the expectation and $f(\gamma_{SINR})$ denotes the pdf of $\gamma_{SINR}$. By applying the popular use-and-forget bound of \cite{Marzetta2016}, the approximate average channel capacity is given by
	\begin{equation}
	\begin{aligned}
	\bar{C}_{SINR}^{app} &=  \mathbb{E}[\text{log}_2(1+\gamma_{SINR})] \\
	&\approx 
	\text{log}_2\left(\frac{P_r \mathbb E[|X_s|^2]}{\sum_{i=1}^{N_I}P_r\mathbb E[|X_i|^2]+ \sigma^2}\right),
	\end{aligned}
	\end{equation}
	where we have $E[|X_s|^2]= \sigma_s^2+ \mu^2$ and $E[|X_i|^2]= \sigma_i^2$. 
	Thus, the average ASE in $bps/Hz/m^2$ is formulated as:
	\begin{equation}\label{ase_sinr}
	 {ASE}_{SINR}^{app} = \frac{4 \bar{C}_{SINR}^{app}}{\pi D^2}.
	\end{equation}
	Similarly, the average SE of the user is given by
	\begin{equation}\label{se_sinr}
	 {SE}_{SINR}^{app} = \bar{C}_{SINR}^{app}.
	\end{equation}
	
Note that the average capacity is a function of $\mu$, $\sigma_s^2$ and $\sigma^2_i$, $i=1,..., N_I$, which are parameterized by $\theta_i^z$ and $\theta_i^a$ for $i=0,..., N_I$, the zenith and azimuth angles of all the users. Furthermore, the angles themselves are functions of the relative locations of the desired user and the interferers through (\ref{thetaiz}) and (\ref{thetaia}), respectively. Therefore, the total average capacity is obtained by
averaging the expressions over the user locations. Recall that the beamforming vectors are only dependent on the user positions, but not on the channel-gains.   Note that, the approximations are universal, straightforward and are applicable for any value of $K$, received power $P_r$ and other system parameters. 	

	\section{Simulation Results}
	\begin{table}[h]
		\begin{center}
			\begin{tabular}{|l|r|}
				\hline
				Parameter & Value \\
				\hline 
				\hline 
				Macro-cell radius $R$& $5-20 ~km$ \\
				\hline
				Micro-cell radius $r$& $50~m$, $75~m$, $100~m$ \\
				\hline
				Vertical airliner/HAP distance $H_t$& $10~km$, $21~km$ \\
				\hline
				Carrier frequency $f_c$ & $73.5~GHz$ \\
				\hline 
				Total Bandwidth B & $5~GHz$ \\
				\hline 
				Reuse factor & 7 \\
				\hline 
			    Bandwidth per User & $714~MHz$ \\
				\hline 
				Dimensions of the planar array $M$ & $200$, $300$, $400$, $500$\\
				\hline
				Rician factor $K$ & $10,15,30~dB$\\
				\hline
				Back-Off & $10~dBm$ \\
				\hline
				Transmitter loss & $1.8~dB$ \\
				\hline
				Transmitter antenna gain & $10\log (M^2)$ \\
				\hline
				Atmospheric and cloud loss & $7.9~dB$ \\
				\hline 
				Receiver antenna gain & $60.2~dB$ \\
				\hline 
				Receiver noise figure & $6~dB$ \\
				\hline 
				Other receiver loss & $1.8~dB$ \\
				\hline
			\end{tabular}
		\end{center}
		\caption{Simulation parameters}
		\label{table:TxParams}
	\end{table}

	\begin{figure}[h]
\centering
  \includegraphics[width=0.5\textwidth]{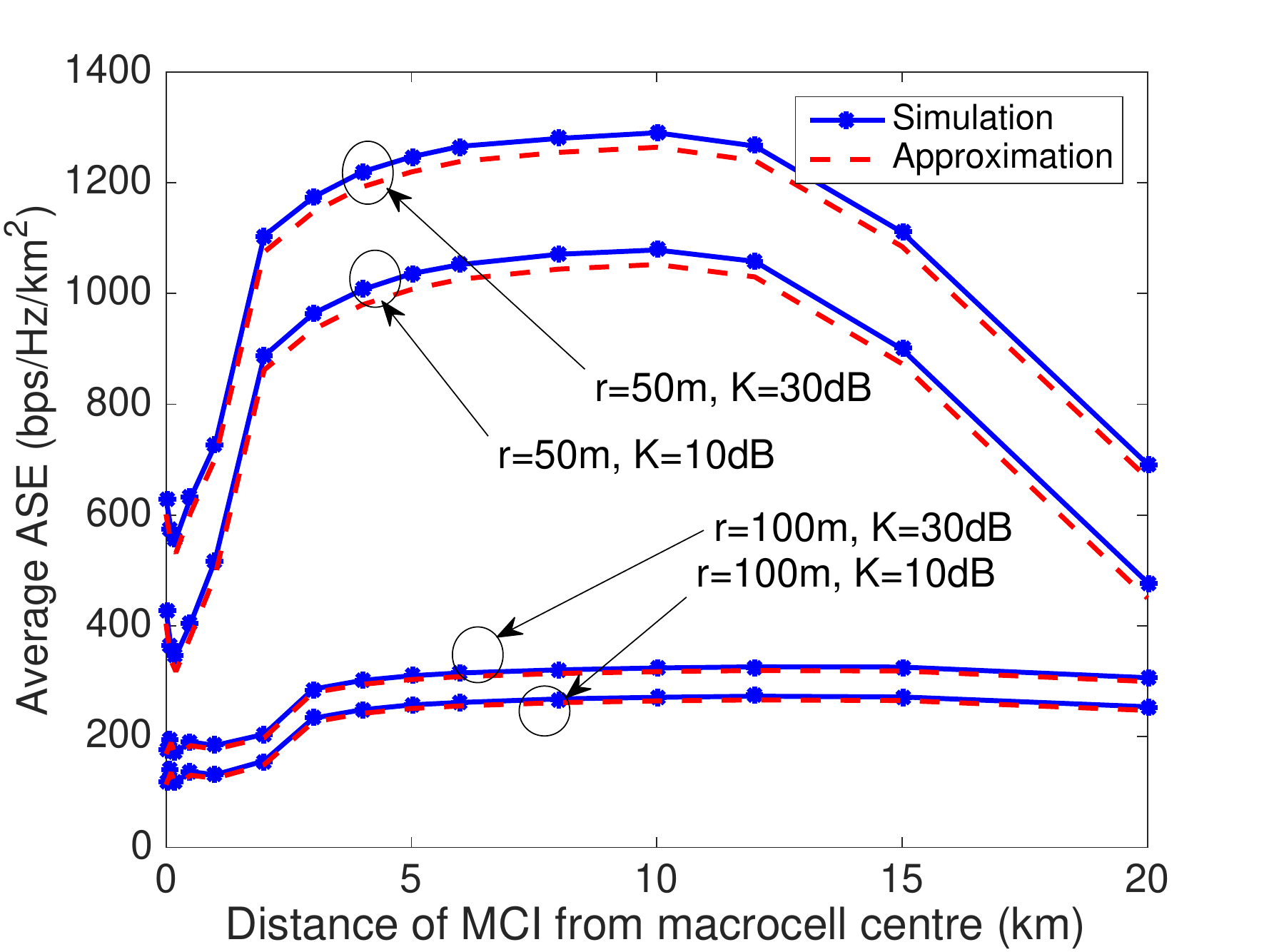}
\caption{Average ASE of NSB vs. distance of MCI from the macro-cell centre for $M=500$ and $H_t=10~km$. The theoretical result is based on (\ref{ase_sinr}).}
\label{fig:ase1}
\end{figure}

\begin{figure}[h]
\centering
  \includegraphics[width=0.5\textwidth]{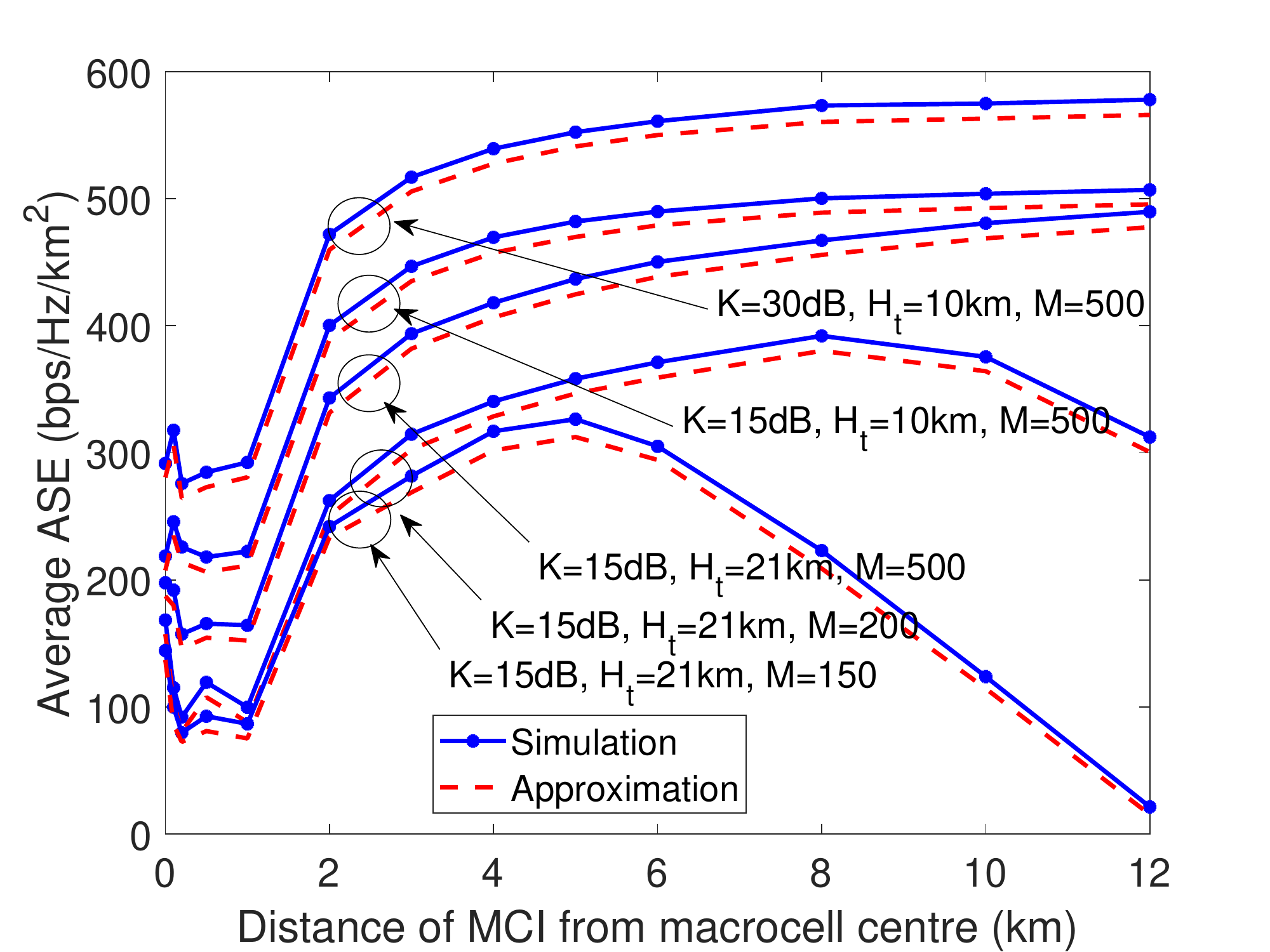}
\caption{Average ASE of NSB vs. distance of MCI from the macro-cell centre for $r=75~m$. The theoretical result is based on (\ref{ase_sinr}).}
\label{fig:ase2}
\end{figure}

Extensive Monte-Carlo simulations have been carried out for evaluating the ASE and SE of the system, assuming that the Airliner/HAP is located at $(0,0,0)$.  The desired UE is positioned uniformly in the MCI of radius $r$ with its centre located at $(x,y,-H_t)$. For a frequency reuse factor of $7$, the reuse distance is fixed at approximately $D=4r$. We consider the five tiers of interfering micro-cells with their centres set to $4kr, \forall k=1,..,5$ from the MCI\footnote{The number of interfering tiers to be chosen is based on an engineering trade-off that varies with the distance from the macro-cell centre. For example, for MCI at the centre of macro-cell (directly below the airliner), considering three tiers of interferers is sufficient. In the case of the macro-cell edge, 5 tiers of interferers are needed for $r=50~m$.}. The interfering UEs are also uniformly placed in the interfering micro-cell of radius $r$. The interferences imposed by the more distant tiers of micro-cells and macro-cells are assumed to be negligible. To represent a strong LoS component, we consider the Rician factors $K$ to be $10$, $15$ and $30$ dB. All the other transmission parameters were proposed initially in \cite{Huang2019} and are summarized in Table \ref{table:TxParams} for completeness.
	
\par In Fig. \ref{fig:ase1}, the simulated and approximate ASE of NSB are plotted vs. the horizontal distance of the MCI from the airliners, for several Rician factors $K$ and micro-cell radii $r$. Naturally, upon increasing $K$, the ASE/SE increases. Since the ASE is inversely proportional to the cell-radius, it decreases upon increasing the micro-cell radius $r$. However, the ASE fails to reach its maximum, when the MCI is directly below the airliner, namely when the MCI is at the macro-cell centre. The ASE is a function of both the MCI distance from the airliners as well as of $ K $ and of the micro-cell radius $r$. For example, for $r=50~m$ and $K=30$ dB, an ASE as high as $1200~ bps/Hz/km^2$ is achieved when the user is at a distance of $5~km$ from the macro-cell centre.\footnote{The signal and interference powers are approximately of the order of $10^6$ and $10^{-6}$ respectively. At a reuse distance $D=200~m$, the SE and ASE are $38~ bps/Hz$ and $1200~ bps/Hz/km^2$ approximately.} At the macro-cell centre, the ASE is reduced to $600~ bps/Hz/km^2$.

When the desired user is at the macro-cell centre, the interferers are distributed in all the quadrants; hence their azimuth angles are uniformly distributed in $[0, 360]$ degrees. Now, as the desired user moves away from the macro-cell centre, there are two effects. Firstly, the range of azimuth angles of the interferers decreases because the first five tiers of interferers we consider are concentrated in a single quadrant. Secondly, the absolute value of the zenith angles ($\theta^z$) of the interference increases from $0^0$ as we move away from the macro-cell centre. Both these effects increase the correlation between the desired and interfering users' steering vectors, hence reducing the power of the null-steering vectors. Therefore, both the signal and interference powers are reduced as the micro-cell centre moves away from the macro-cell centre. Near the macro-cell centre the reductions of both the signal and interference powers become similar and hence the ASE fluctuates. However, as the micro-cell centre moves further away, the interference power reduction is more substantial than the signal power reduction and hence the ASE increases. Further away, the desired power reduction becomes substantial and hence the ASE decreases. The SE variations vs. the distance can also be explained using similar reasoning. Furthermore, for higher altitudes of the airliner, the SINR of the users decreases owing to their higher path-loss. Therefore, the ASE decreases, as observed in Fig. \ref{fig:ase2}. Note that the ASE achieved by the massive MIMO scheme of \cite{Xin2015} is on the order of $10~bps/Hz/km^2$, while our scheme achieves in the order of $1000~ bps/Hz/km^2$, which is comparable to the ASE achieved by the HetNet and DenseNet of \cite{Li2016} and \cite{Ding2015}, respectively.   The transmit (Tx) power at each of the antenna elements needs to be carefully chosen depending on the noise power, the number of antenna elements, and the power consumption allowed at the transmitter. For a Tx power of $30~dBm$, power consumption of $40~kW$ is incurred for $M=200$. Whereas for a Tx power of $5~dBm$, power consumption of only $0.126~kW$ is incurred, without any drop in the ASE \footnote{Though a smaller Tx power can be chosen with no drop in the ASE, we have chosen $5~dBm$ to account for unforeseen practical losses.}. Even if we choose $M=500$, the power consumption does not exceed $0.8~kW$ for a Tx power of $5~dBm$.

\begin{figure}[h]
\centering
  \includegraphics[width=0.5\textwidth]{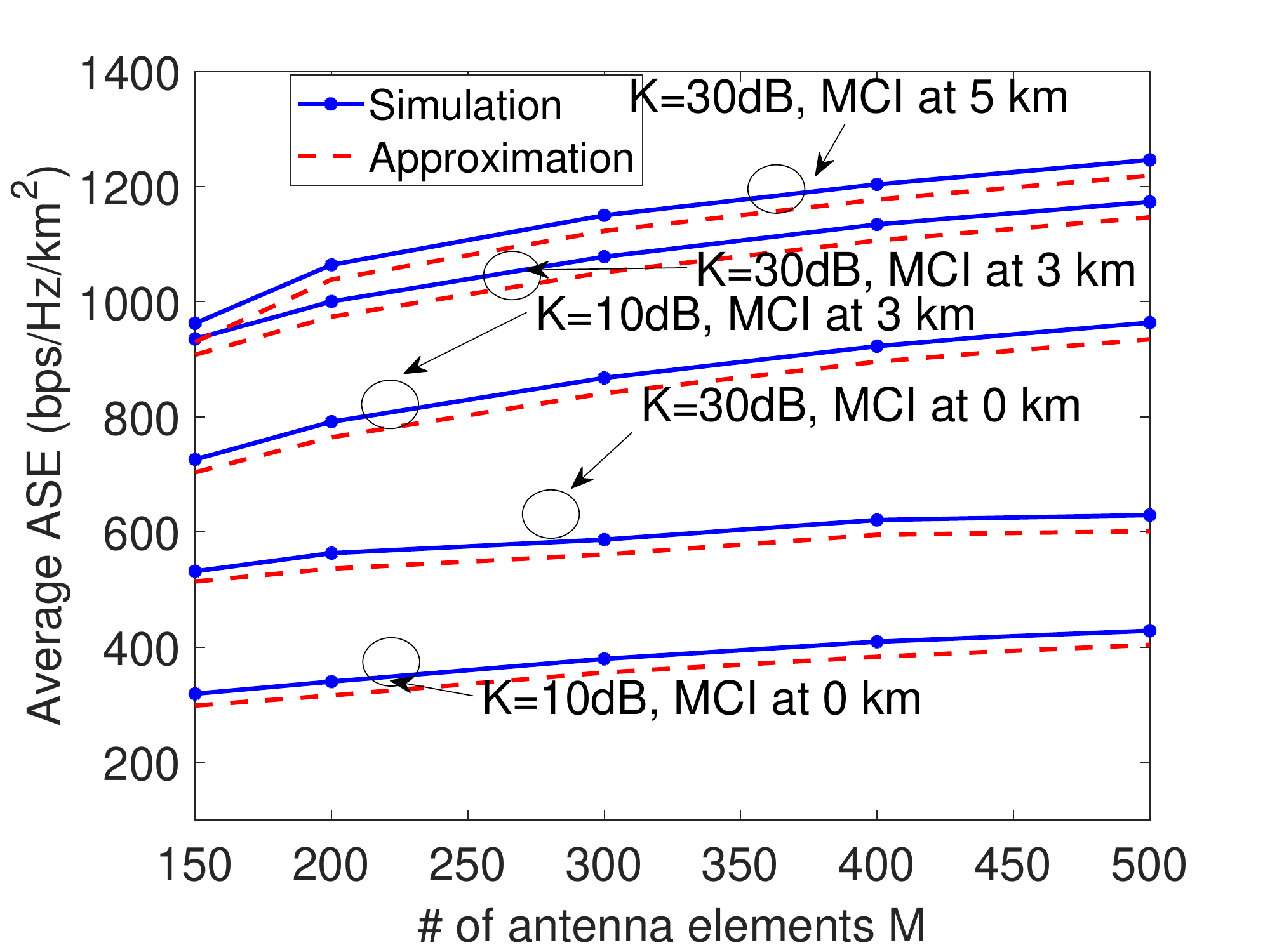}
\caption{Average ASE of NSB vs. $M$ for $H_t=10~km$ and $r=50~m$. The theoretical result is based on (\ref{ase_sinr}).}
\label{fig:ase3}
\end{figure}

\begin{figure}[h]
\centering
  \includegraphics[width=0.5\textwidth]{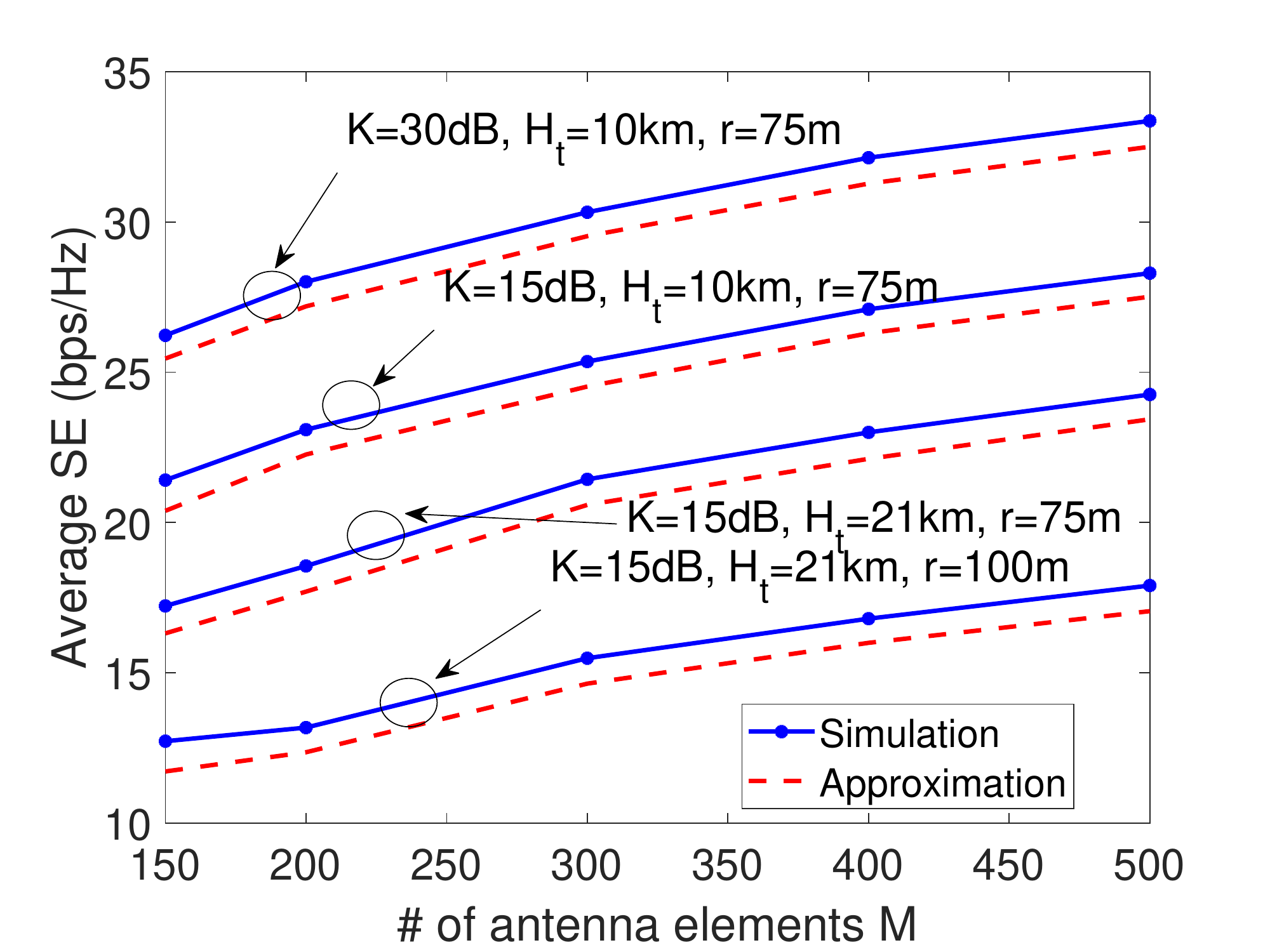}
\caption{Average SE of NSB vs. $M$ for MCI at $2~km$ from the macro-cell center. The theoretical result is based on (\ref{se_sinr}).}
\label{fig:se1}
\end{figure}

\par The ASE and SE variations of NSB vs. $M$ are portrayed in Fig. \ref{fig:ase3} and \ref{fig:se1}. An array dimension of $500\times 500$ provides the highest ASE. For an antenna element spacing of $\lambda/2$, where $\lambda=4~mm$ is the career wavelength, the array dimensions will not exceed $1~m^2$. The increase in ASE vs. $M$ remains marginal compared to that vs. the Rician factor $K$. For example, for an increase in $M$ from $200$ to $500$, the ASE improves from approximately $300$ to $400$ $bps/Hz/km^2$ respectively. On the other hand, for an increase in $K$ from $10$dB to $30$dB, the ASE improves from $300$ to $550$ $bps/Hz/km^2$, respectively. With an increase in micro-cell radius, we observe a SE reduction in Fig. \ref{fig:se1}. However, the reduction is only marginal compared to the ASE reduction seen in Fig.\ref{fig:ase1}.

\begin{figure}[h]
\centering
  \includegraphics[width=0.5\textwidth]{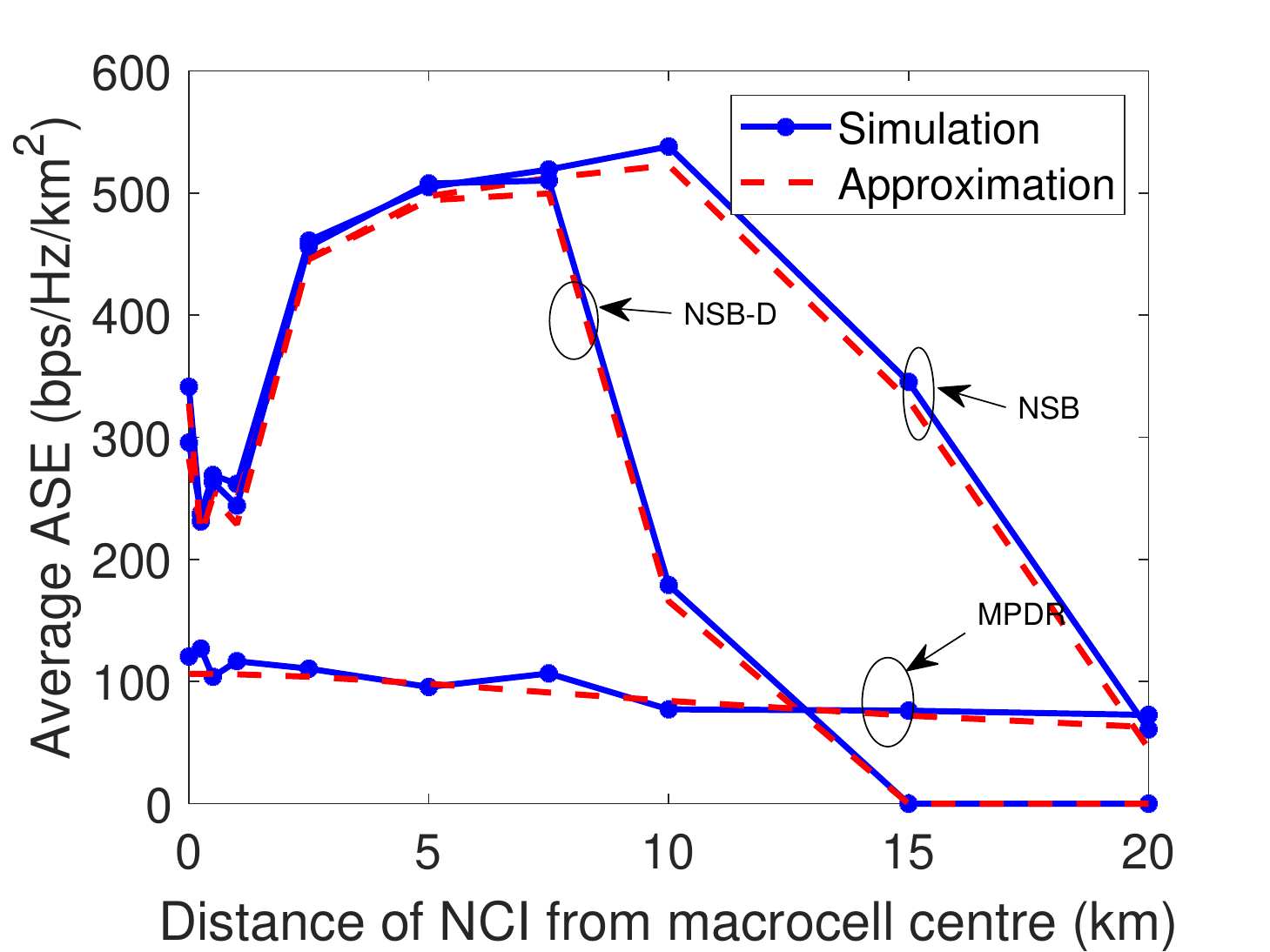}
\caption{Average ASE vs. distance of MCI from the macro-cell centre for $M=300$, $r=75~m$, $K=30dB$ and $H_t=10~km$.}
\label{fig:ase4}
\end{figure}

\begin{figure}[h]
\centering
  \includegraphics[width=0.5\textwidth]{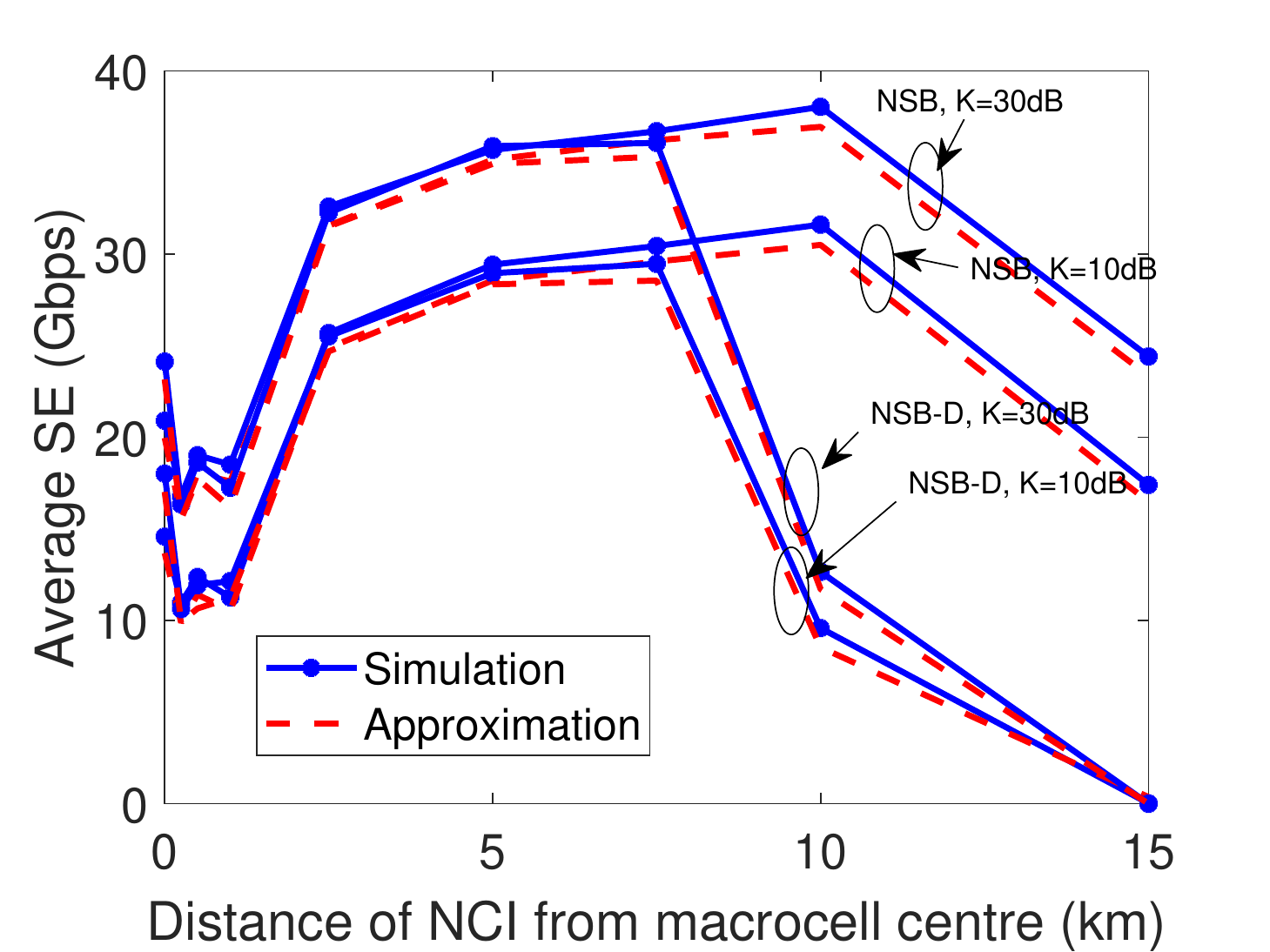}
\caption{Average SE vs. distance of MCI from the macro-cell centre for $M=300$, $r=75~m$ and $H_t=10~km$.}
\label{fig:se2}
\end{figure}

In Fig. \ref{fig:ase4}, the simulated ASE is plotted vs the horizontal distance of the MCI from the airliners for various beamforming techniques. We can observe that NSB-D performs nearly as well as NSB, provided that the MCI is near the macrocell centre. However, recall that as the distance of the MCI increases, the correlation between the desired and interfering users' steering vectors increases, hence reducing the beamforming vectors' power. Coupled with the widening of beam by the derivative constraints, a drastic output SINR, and ASE reduction, is observed. For example, for an MCI at a distance of $10$~km from the macrocell centre, the ASE of NSB-D is nearly half of that of NSB.  A similar trend can be observed in SE for different $K$ values, as seen in Fig. \ref{fig:se2}.
On the other hand, MPDRB provides the lowest ASE of the three beamformers. For example, for an MCI at a distance of $2.5$~km from the macrocell centre, the ASE of MPDRB is nearly third of that of NSB. However, note that MPDRB is more robust to variations in the distance of the MCI from the macrocell centre. The correlation between the desired and interfering users' steering vectors does not contribute to a significant ASE reduction because it does not explicitly create nulls in the interfering users' directions, unlike NSB and NSB-D.

\subsection{Doppler compensation}
  We note that all the above simulations considered an idealized setting for which the Doppler spread was neglected. However, in a practical system, it is necessary to compensate for the Doppler frequency offset (DFO) caused by the airliner's radial velocity component with respect to the user location. Since the velocity reading of the high-speed airliner are perfectly known at the transmitter, the Doppler shift due to this motion can be pre-compensated~\cite{guo2017angle}. Assuming that the airliner is heading in the $0^0$ azimuth-direction, the radial component of the velocity vector towards the angle pair $(\theta_i^z,\theta_i^a)$ is 
	\begin{align}
	v_r = v_a\cos \theta_i^a \sin \theta_i^z,
	\label{p6eqnn1}
	\end{align}
	where $v_a$ is the airplane's speed. Hence, the Doppler-induced frequency deviation in the $(\theta_i^z,\theta_i^a)$-direction is given by
	\begin{align}
	f_D = \frac{\left(1 - \frac{v_r}{c}\right)}{\left(1 + \frac{v_r}{c}\right)}f_T,
	\label{p6eqnn2}
	\end{align}
	where $c$ is the speed of light and $f_T$ is the carrier frequency. To pre-compensate for the spread in frequency, the airplane can transmit at a frequency of
	\begin{align}
	{f}_T = \frac{\left(1 + \frac{v_r}{c}\right)}{\left(1 - \frac{v_r}{c}\right)}f_c,
	\label{p6eqnn3}
	\end{align}
	so that at the user end, the received signal frequency will be $f_c$ after Doppler shift. Note that for an average airplane speed of $200~$m/s and a high carrier frequency of $73.5GHz$, the Doppler offset is negligible, since $\left(1 \pm \frac{v_r}{c}\right) \approx 1$. 
	In some cases, the radial component of the velocity vector towards the angle pair can be incorrectly estimated as 
	\begin{align}
\tilde {v_r} = v_a\cos \theta_i^a \sin \theta_i^z + \Delta v_r v_a\cos \theta_i^a \sin \theta_i^z,
	\label{p6eqnn4}
	\end{align}
	where $\Delta v_r$ is the offset/error in the estimation with $\Delta v_r=0$ indicating perfect estimation. In such a case, we can observe that the pre-compensation results in the airplane transmitting at a frequency of 
	\begin{align}
	\hat{f}_T = \frac{\left(1 + \frac{\tilde v_r}{c}\right)}{\left(1 - \frac{\tilde v_r}{c}\right)}f_c,
	\label{p6eqnn5}
	\end{align}
	which after the Doppler spread results in 	\begin{align}
	\hat{f}_c = \frac{\left(1 - \frac{v_r}{c}\right)}{\left(1 + \frac{v_r}{c}\right)}\frac{\left(1 + \frac{\tilde v_r}{c}\right)}{\left(1 - \frac{\tilde v_r}{c}\right)}f_c
	\label{p6eqnn6}
	\end{align}
	at the user end.  In Table \ref{tab:DFO}, the average ASE is tabulated for the different beamformers operating with and without incorrect frequency offset estimation. It can be noted that even for a $50\%$ error in estimating the offset, corresponding to $\Delta v_r= \pm 1$, the degradation in the ASE is moderate. The degradation is the lowest for NSB-D, because the derivative constraints increase the beam-width towards the user directions and increase the robustness to steering vector mismatches.

\begin{table}[h]
    \centering
    \begin{tabular}{|c|c|c|c|c|c|}
    \hline
    Beamformer & $\Delta v_r=-1$ & $-0.5$ & $0$ & $0.5$ & $1$\\
    \hline
     \hline
     NSB & 963 &  969  & 969 & 969 & 965\\
     \hline
     NSB-D  & 907 & 908 & 909 & 908 & 907\\
         \hline
     MPDRB  & 215 & 215 & 217 & 216 & 215\\
        \hline
        \hline
    \end{tabular}
    \caption{Average ASE expressed in $bps/Hz/km^2$ vs $
    \Delta v_r$ for $M=200$, $K=30~dB$, $r=50$~m, $H_t=10$~km and MCI at a distance of $2.5$~km from the macrocell centre.}
    \label{tab:DFO}
\end{table}

\subsection{Position vector mismatches}
  One of the essential requirements for the proposed technique is the signalling of the user locations at the airplane. The users can estimate their location using a global positioning system (GPS) and communicate it to the airplane with a local base station's aid. A high precision technique such as real-time kinematic (RTK) processing or carrier phase differential tracking can be employed \cite{dai2001study}. Alternatively, it is possible to continuously track the user from the airplane by periodically sending signals to the users. In other words, the established methods of precise positioning of high-speed vehicles can be used to get an accurate estimate of the user location  \cite{talvitie2020beamformed,talvitie2018positioning}. Given precise estimates of the users' locations and the airplane's velocity, sub-meter position accuracy can be obtained for accurate beamforming \cite{lenschow1972measurement}.

We study the performance of the beamformers for up to $5$~m offset in the users' position. Let us assume that the coordinates of the users and the interferers are incorrectly measured as $(x_i+\delta cos \beta_i,y_i +\delta sin \beta_i,-H_t)$ $i=0,1,..., N_I$, where $\delta$ is the magnitude of the offset and $\beta_i$ is the random phase. The average ASE for various values of $\delta$ (in metres)  is shown in Table \ref{tab:PVM}. We observe that NSB-D is more robust to position vector mismatches than NSB. NSB-D also outperforms NSB in the presence of position vector mismatches due to its broader beamwidth. For example, for an MCI at $3.5$~km, with no position vector mismatch, i.e., $\delta=0$, the ASE of NSB-D is $763~bps/Hz/km^2$, and that of NSB is $1025~bps/Hz/km^2$. For $\delta=1$~m, the ASE is nearly halved in the case of NSB, whereas the degradation is negligible for NSB-D. Although MPDRB is also resistant to position vector mismatches compared to NSB, the overall ASE obtained still remains lower.

\begin{table}[h]
    \centering
    \begin{tabular}{|c|c|c|c|c|c|}
    \hline
    Beamformer & MCI (km) & $\delta= 0$  & $\delta= 0.5$ & $\delta= 1$ & $\delta= 5$\\
    \hline
     \hline
     NSB & $1$ &  637  & 528 & 474 & 329\\
      & $2.5$ &  969 & 544  & 480 & 335\\
      & $3.5$ &  1025 & 538 & 478 & 332\\
     \hline
     NSB-D  & $1$ & 580 & 580 & 580 & 550\\
       & $2.5$ & 908 & 908 & 867 & 599\\
     & $3.5$ & 763 & 763 & 746 & 496\\
         \hline
        MPDRB  & $1$ & 221 & 221 & 221 & 220\\
       & $2.5$ & 220 & 220 & 218 & 217\\
       & $3.5$ & 215 & 215 & 214 & 214\\
        \hline
        \hline
    \end{tabular}
    \caption{Average ASE expressed in $bps/Hz/km^2$ vs $\delta$ for $M=200$, $K=30~dB$, $H_t=10$~km and $r=50$~m.}
    \label{tab:PVM}
\end{table}

\par Both the theory and simulations indicate substantial ASE gains for our proposed airplane-aided integrated network. However, our work only represents the first step towards realizing a high-capacity ISTN; hence it relies on some idealized simplifying assumptions to be eliminated by future research.

\section{Conclusions and Directions for Future Research} 
\par In this treatise, we considered a planar-array aided stand-alone airliner/HAP in a macro-cell communicating with the terrestrial BS/users.   To achieve a high directional gain and to minimize the interference among the users, we invoked three beamforming techniques, namely NSB, NSB with derivative constraints, and MPDRB, for transmission from the airliner/HAP and provided approximate SE and ASE expressions. We also studied the performance of the system for incorrect Doppler compensation and position vector mismatches. NSB was the least complex TPC scheme that provided maximum ASE, in the absence of position vector mismatches. However, NSB-D was more robust to position vector mismatches and MPDRB more robust to variations in the distance of the MCI from the macrocell centre.
 
\par  We considered a simple system, where the airliner is the network provider for the terrestrial users. The routing protocols, traffic-offloading and optimizing the tele-traffic resources in conjunction with the existing terrestrial and space networks require careful further study.    An interesting future study would quantify the effect of the beam-squint encountered by the proposed schemes \cite{liao2021terahertz, wang2018spatial, chen2020hybrid}. A possible solution to overcome this problem could be to partition the entire band into multiple narrower bands and apply a phase shift based beamformer separately. Furthermore, the hybrid beamforming techniques discussed in \cite{chen2020hybrid, ahmed2018survey, bogale2016number, han2015large, chen2017efficient, wang2018digital, 8334262} could be adopted for overcoming the beam-squint effects and for reducing the power consumption.   Another critical topic that can be explored is the impact of mobility and real flight schedule on the handovers between the different macro-cells and the existing networks. Finally, the optimal radius of micro-cells used at different distances from the macro-cell centre is another exciting future research direction.

	\appendices
	\section{Proof for Theorem \ref{thm:sinr}}\label{app:sinr}
The array gain of the desired user is $X_s = \h_{0,Ric}^H \tilde{\e}_0$ and that of the $i$th interferer is $X_i=\h_{0,Ric}^H \tilde{\e}_i $. In order to determine the distribution of the SINR, we have to determine the distribution of the components in the numerator and the denominator. Upon expanding $X_s$, we arrive at, 
	\begin{align}
	X_s &=  \h_{0,Ric}^H \tilde{\e}_0 \nonumber \\
	&= \sqrt{\frac{K}{1+K}} \left(\e_0^H \e_0- \e_{0}^H\Eb_0 \left(\Eb_0^H\Eb_0\right)^{-1}\Eb_0^H\e_{0}\right) \nonumber \\
	& \: + \sqrt{\frac{1}{1+K}}h_0\i1^H\left(1-\Eb_0 \left(\Eb_0^H\Eb_0\right)^{-1}\Eb_0^H\right)\e_0.
	\end{align}
	Observe that by construction, we have: 
	\begin{equation}
	\e_j^H \e_i= \begin{cases}
	M^2;  \qquad \qquad \quad  i=j\\
	\frac{\sin \left( \frac{M}{2} \left(\psi_{j}^x -\psi_i^x\right)\right)}{ \left( \frac{1}{2} \sin \left(\psi_{j}^x -\psi_i^x\right)\right)} \frac{\sin \left( \frac{M}{2} \left(\psi_{j}^y -\psi_i^y\right)\right)}{ \left( \frac{1}{2}\sin \left(\psi_{j}^y -\psi_i^y\right)\right)};\: i \neq j,
	\end{cases}
	\end{equation}
	where $\psi_x$ and $\psi_y$ are defined in (\ref{psix}) and (\ref{psiy}), respectively. Since, $h$ is a zero-mean Gaussian RV with unit variance, it can be readily seen that $X_s$ is a Gaussian RV with a mean given by (\ref{mu}) and variance by (\ref{sigma_s}).
	Now the $i$th interference component is formulated as:
	\begin{align}
	X_i &= \h_{0,Ric}^H \tilde {\e}_i,  \nonumber \\
	&= \left(\sqrt{\frac{K}{1+K}}\e_{0}^H+\sqrt{\frac{1}{1+K}}h_0\i1^H \right)\tilde{\e}_i \nonumber\\
	&= \left(\sqrt{\frac{K}{1+K}}\e_{0}^H+\sqrt{\frac{1}{1+K}}h_0\i1^H \right) \nonumber \\
	& \quad \times \left(1-\Eb_i \left(\Eb_i^H\Eb_i\right)^{-1}\Eb_i^H\right)\e_i,
	\end{align}
where the null-steering beamforming vectors $\tilde{e}_j$ are designed in such a way that $\e_{i}^H \tilde{\e}_j \approx 0$, $\forall~i~\ne~j$. Hence, the RV $X_i$ has a negligible mean component. On the other hand, $X_i$ has a variance given by (\ref{sigma_i}).

\end{document}